\documentclass[conference,letterpaper]{IEEEtran}

\usepackage[utf8]{inputenc} 
\usepackage[T1]{fontenc}
\usepackage{url}
\usepackage{ifthen}
\usepackage{cite}
\usepackage[cmex10]{amsmath} 

\usepackage{amssymb}
\usepackage{amsthm}
\usepackage{tikz} 
\newtheorem{definition}{Definition}
\newtheorem{example}{Example}
\newtheorem{proposition}{Proposition}
\usepackage{nicefrac}

\begin{document}



\title{Optimizing Bit-Labeling of Voronoi Constellations} 

\author{%
 \IEEEauthorblockN{Carilyn Rumrill, David Muzzey, Connor Davis, Stephen Mackes, and Dan Chew}
\IEEEauthorblockA{Rampart Communications\\
                   Linthicum Heights, USA\\
                   Email: \{carrie, dmuzzey, connor, stephen, dchew\}@rampartcommmunications.com} }

\maketitle

\begin{abstract}
We define a novel search method and performance metric as a technique for optimizing the bit-to-symbol map of the $D_4$ and $E_8$ root lattices in reference to bit error rate. We hold other sources of lattice gain constant by fixing the lattice constellation, and consider basis matrices that permute the integer labelings of the lattice points. After searching the possible basis matrices for $D_4$ and $E_8$, we found 0.1 dB of gain in $D_4$ bit error rate curves, and 0.5 dB of gain in $E_8$ compared to the standard bases commonly used in literature at a BER of $10^{-4}$. 
\end{abstract}

\section{Introduction}

\IEEEPARstart{L}{attice} codes constructed with the root lattices give substantial gain in terms of bit error rate (BER) and lattice error rate (LER) over uncoded QAMs (\cite{mirani2020low}, \cite{sadeghi2025design}). However, they are rarely used in practice. One reason for this is their inability to Gray code ~\cite{li2021designing}. The density of the lattice and the ability to minimize the number of bit differences between nearest neighbors are opposing goals. A lattice point in a dense lattice has too many nearest neighbors to include only single bit differences between them, as in QAMs. Motivated by this, we describe a method to improve the bit-to-symbol map, giving (BER) gain over the standard lattice bases commonly used in literature \cite{conway2013sphere}. We focus on the $D_4$ and $E_8$ root lattices, using $A_2$ only for examples. 

 We use the following self-similar lattice code modulation and demodulation technique as given by Conway and Sloane \cite{conway2003fast}. Given bits for encoding, the bits are first Gray mapped (see Gray Mapping in~\cite{mirani2020low, li2021designing}) and converted to integer vectors modulo $r$ of length $n$, where $n$ is the dimension of the lattice and $r$ is the defined modulus. The integer vectors are mapped onto the lattice with a basis matrix $B$. These points are then reshaped to lattice point representatives existing within the $r$-scaled Voronoi region repositioned by a shift vector $\vec{h}$. The lattice points are split into IQ pairs and normalized for transmission across the additive white Gaussian noise channel. Upon reception, the IQ pairs are reconstructed to $n$-length vectors and the shift vector is added. A quantizer is applied to approximate the closest lattice points. The inverse basis and inverse Gray map are applied to approximate the received bits.




We define a performance metric, Hamming density, and a search method, Hamming descent, to sample the large space of possible integer labelings and compare performance. We found 0.1 dB of gain in $D_4$, and 0.5 dB of gain in $E_8$ at $10^{-4}$. We found no gain from any $A_2$ basis, so those results were excluded from this paper.



In a survey of techniques~\cite{li2021designing}, the encoding and labeling techniques from Feng et al.~\cite{feng2013algebraic}, and Kurkoski~\cite{kurkoski2018encoding} are compared in terms of their Gray penalty (essentially Hamming density below). However their reduction of bit errors depends on the lattice constellation construction. The studied use case is an integer lattice shaped by a more dense lattice, motivated by less complex encoding and decoding operations. In our framework, which uses the same dense lattice for both coding and shaping, the suggested basis changes of these methods are trivial. In Feng's et al. method, the scaled identity is already in Smith Normal form, and in Kurkoski's method the matrix $L$ is also a scaled identity matrix (see section 3 of~\cite{li2021designing}). Because these suggested alternatives reduce to the original encoding and indexing in~\cite{conway2003fast}, they will not improve the Gray Penalty/Hamming Density of the bit labeling. As a result, we suggest an entirely different approach to bit-labeling for a self-similar lattice code.

Following \cite{conway2013sphere}, we use the bases given by the columns of equations \eqref{eq: a2b}, \eqref{eq: d4b}, and \eqref{eq: e8b}, which we refer to as `standard bases.' We fix a shift vector $\vec{h}$ to the shift vector generated from the standard basis with $r=4$ and the iterative optimization algorithm \cite{conway2003fast}, given in equations \eqref{eq: 1}, \eqref{eq: 2}, and \eqref{eq: 3}.

\begin{align}
A2:\quad 
B &=
{
\setlength{\arraycolsep}{3pt}
\renewcommand{\arraystretch}{0.85}
\begin{pmatrix}
1 & -\nicefrac{1}{2} \\
0 & \nicefrac{\sqrt{3}}{2}
\end{pmatrix}
}
\label{eq: a2b}
\\[0.5em]
\vec{h} &= (\frac{1}{8}, \frac{\sqrt{3}}{8})\label{eq: 1}
\\[0.5em]
D4:\quad 
B &=
{
\setlength{\arraycolsep}{3pt}
\begin{pmatrix}
1 & 1 & 0 & 0 \\
1 & -1 & 1 & 0 \\
0 & 0 & -1 & 1 \\
0 & 0 & 0 & -1
\end{pmatrix}
}
\label{eq: d4b}
\\[0.5em] \vec{h} &= (\frac{17}{32}, \frac{6}{32}, 0, \frac{-11}{32})\label{eq: 2}
\\[0.5em]
E8:\quad 
B &=
{
\setlength{\arraycolsep}{2.5pt}
\begin{pmatrix}
2 & -1 & 0 & 0 & 0 & 0 & 0 & \nicefrac{1}{2} \\
0 & 1 & -1 & 0 & 0 & 0 & 0 & \nicefrac{1}{2} \\
0 & 0 & 1 & -1 & 0 & 0 & 0 &\nicefrac{1}{2} \\
0 & 0 & 0 & 1 & -1 & 0 & 0 & \nicefrac{1}{2} \\
0 & 0 & 0 & 0 & 1 & -1 & 0 &\nicefrac{1}{2} \\
0 & 0 & 0 & 0 & 0 & 1 & -1 & \nicefrac{1}{2} \\
0 & 0 & 0 & 0 & 0 & 0 & 1 & \nicefrac{1}{2} \\
0 & 0 & 0 & 0 & 0 & 0 & 0 & \nicefrac{1}{2}
\end{pmatrix}
}
\label{eq: e8b}
\\[0.5em]
\vec{h} &= (0.645, 0.0484,0.116, 0.214,\label{eq: 3}\\ &\hspace{.65cm}0.182, 0.247, 0.317, 0.083).\nonumber
\end{align}


\section{Considering Marked Lattices}

We rely on the nuance between the following definitions in our research. 

\begin{definition}
    A lattice, $\Lambda$, is the set of integral linear combinations of $n$ linearly independent vectors $b_1, b_2, ... b_n$. Arranging these basis vectors as the columns of a $n \times n$ matrix $B$, we write 
    \begin{equation}
    \Lambda = \{Bz | z \in \mathbb{Z}^n \}.
\end{equation}
\end{definition}

\begin{definition}
    A marked lattice is a lattice paired with a specific choice of basis, denoted by $(\Lambda, B).$
\end{definition}

\noindent We hold other sources of gain constant, beyond the bit-to-symbol map, in our comparison to the standard bases by considering multiple marked lattices for a given lattice. That is, we consider various labelings for a fixed constellation. 




We apply unimodulars to the standard bases to find distinct marked lattices, while remaining on the same lattice. That is, a search through the group $\text{SL}_n (\mathbb{Z})$ will provide all distinct labelings for a given lattice. However, this is a highly redundant search since $\text{SL}_n (\mathbb{Z})$ is an infinite group, while our list of distinct lattice labelings is finite. This necessitates an appropriate equivalence relation that reduces marked lattices by their equivalent lattice labelings. We use the group isomorphism,



\begin{align}
     \text{SL}_n(\mathbb{Z})/\Gamma_r\label{eq: eqiv} \cong \text{SL}_n(\mathbb{Z}/r\mathbb{Z}) \\
    \text{where } \Gamma_r = (I+r\mathbb{Z}^{n\text{x}n})\cap\text{SL}_n\mathbb{Z}. 
\end{align}

\noindent Equation \eqref{eq: eqiv} can be read as, two marked lattices produce the same labeling if their unimodulars reduce to the same matrix modulo $r$. As a shorthand, we will denote the left side of Equation \eqref{eq: eqiv} as $G_n(r)$.

 We choose a generating set for this group consisting of only elementary shear matrices.  For reference, $G_4(r)$ can be presented using 12 shear generators  and $G_8(r)$ can be presented using 56.

The cardinality of $G_n(r)$ is expressed as, 
\begin{equation}
    |G_n(r)| = r^{n^2-1} \prod_i (1-\frac{1}{p_i^2})(1-\frac{1}{p_i^3})...(1-\frac{1}{p_i^n})
\end{equation}
where $p_i$ are unique prime factors of $r$ \cite{diamond2005first}.  
Note the rate the cardinality of $G_n(r)$ grows as we increase $n$ and $r$ in Table \ref{tab:group_orders}. For the parameters we care about, $G_n(r)$ is too large to directly exhaust through, thus we introduce a sub-exhaustive search in Section \ref{sec: hamming_decent}. 

\begin{table}[]
\centering
\renewcommand{\arraystretch}{1.5} 

\caption{$|G_n(r)|$ for given values of $n$ and $r$}
\label{tab:group_orders}

\begin{tabular}{|c|c|c|c|}
\hline
  & $n = 2$ & $n = 4$ & $n = 8$ \\ 
\hline
$r = 2$ & $6$  & $2.016 \times 10^4$    & $5.3481 \times 10^{18}$ \\ 
\hline
$r = 4$ & $48$  & $6.606 \times 10^8$    & $4.9327 \times 10^{37}$ \\ 
\hline
$r = 8$ & $384$ & $2.1647 \times 10^{13}$ & $4.5496 \times 10^{56}$ \\ 
\hline
\end{tabular}
\end{table}

\section{Hamming Density}

Rather than running a BER curve or a single BER point for every considered unimodular, we introduce a faster performance metric for a marked lattice (see Gray penalty \cite{li2021designing}).

\begin{definition}
    Given a marked lattice with a self-similar code construction, we call the total number of bit differences of a lattice point to all their nearest neighbors (including neighbors wrapped around the scaled voronoi region) divided by the number of nearest neighbors the Hamming density (HD).
\end{definition}

\noindent Proposition \ref{prop} claims the HD is well defined, meaning HD is constant across every point in a marked lattice. 
Thus, we can define the HD associated with marked lattice by the bit differences of a single lattice point with its neighbors divided by the number of nearest neighbors. For simplicity, we chose the lattice point generated from the zero vector. As such, the HD approximates the bit errors per lattice error by ignoring lattice errors outside the nearest neighbor set. For reference,  a lattice point on $A_2$ has 6 nearest neighbors, $D_4$ has 24 nearest neighbors, and $E_8$ has 240 nearest neighbors.

\begin{example}

For simplicity, we will refer to lattice points by their integer labelings in the following example. Consider two $A_2$ $r=4$ lattice constellations in Figures \ref{fig:shear} and \ref{fig:noshear}.

\noindent The first figure is made with the standard basis and the second is made with the the product of the standard basis and the unimodular $\begin{pmatrix}
    2 & 3 \\
    1 & 2
\end{pmatrix}$. Consider the nearest neighbors of the zero vector. (Again, this is the lattice point with the integer labeling $\vec{0}$). The standard basis has nearest neighbors: 

\begin{equation*}
(01),(11),(10),(03),(33), \text{and } (30).
\end{equation*}

\noindent If we convert these integers to bits and reverse the Gray map we get: 

\begin{equation*}
(0001),(0101),(0100),(0010),(1010),\text{and }(1000).
\end{equation*}

\noindent The zero integer vector has 8 bit differences between it and it's integer neighbors. The HD tells us there are approximately $\frac{8}{6} = 1.33$ bit differences per nearest neighbor in this marked lattice. 

\noindent If we do the same for the sheared basis we get 16 bit differences and HD $\frac{16}{6} = 2.67.$

\end{example}

\begin{proposition}
\label{prop}
Given a marked lattice, the number of bit differences between any given constellation point and its nearest neighbors is constant throughout the entire lattice when the constellation is quotiented around a scaled Voronoi region. 
\end{proposition}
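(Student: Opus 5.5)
The plan is to reduce the statement to a property of the labeling that is invariant under translation, and then to show that property coordinate by coordinate. Fix a marked lattice $(\Lambda,B)$ and modulus $r$. The constellation points are identified with labels $z\in(\mathbb{Z}/r\mathbb{Z})^n$ via $z\mapsto Bz \bmod r\Lambda$, shifted into the Voronoi region by $\vec h$. Let $S\subset\Lambda$ be the set of minimal vectors (6, 24 or 240 of them). Because we work in the quotient $\Lambda/r\Lambda$, with neighbors wrapped around the scaled Voronoi region as in the definition of HD, the neighbors of the point labeled $z$ are exactly the points $Bz+v$ with $v\in S$, reduced mod $r\Lambda$.

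Writing $v=Bw$ with $w\in\mathbb{Z}^n$, the neighbor has label $z+w \bmod r$. Hence the set of label differences $W=B^{-1}S \bmod r$ does not depend on $z$. This is the first key step: the quotient turns the constellation into the group $\Lambda/r\Lambda$, so every point has the same neighborhood up to translation, and boundary effects disappear. The shift $\vec h$ only translates the representatives and does not change neighbor relations in the quotient.

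Next I would write the total number of bit differences as
\[
N(z)=\sum_{w\in W}\ \sum_{j=1}^{n} d\bigl(g(z_j),\,g(z_j+w_j)\bigr),
\]
where $g$ is the per-coordinate Gray map $\mathbb{Z}/r\mathbb{Z}\to\{0,1\}^{\log_2 r}$ and $d$ is Hamming distance. Exchanging the sums gives $N(z)=\sum_j F_j(z_j)$ with $F_j(a)=\sum_{k}m_{j,k}\,d(g(a),g(a+k))$. Here $m_{j,k}$ is the number of $w\in W$ with $w_j\equiv k$. Because $S=-S$, we have $m_{j,k}=m_{j,-k}$. It therefore suffices to show that for each $k$ the symmetrized quantity $\phi_k(a)=d(g(a),g(a+k))+d(g(a),g(a-k))$ is independent of $a\in\mathbb{Z}/r\mathbb{Z}$.

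This last step is the main obstacle, because it is a property of the Gray code rather than of the lattice. For $r=2$ and $r=4$ I would check it directly. There, $d(g(a),g(a+k))$ already depends only on $k$: it is $1$ for $k=\pm1$ and $2$ for $k=2$ when $r=4$, since the reflected Gray code is cyclic. For general $r=2^m$, I would first try induction on $m$ using the reflection structure $g(a)=(b,\,g'(\cdot))$, tracking how the top bit and the reflected lower block behave under $a\mapsto a\pm k$, and using the $\pm k$ pairing to cancel the asymmetry created by the reflection. I am not certain the symmetric sum is constant for every $k$ when $r\ge 8$ (single pairs are not: $d(g(0),g(3))\ne d(g(1),g(4))$ for $r=8$). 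If the induction fails, I would restrict the claim to the moduli actually used ($r=4$), where the per-pair distances are already translation invariant, and note that this suffices for the HD computations in the paper.
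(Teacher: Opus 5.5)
Your first step matches the paper's: in the quotient, the label differences $W=B^{-1}S \bmod r$ do not depend on the point, so everything reduces to the Gray map. The paper's proof stops there. It asserts that after Gray mapping ``a single integer difference corresponds to a single bit difference.'' You correctly observe that this holds only for coordinate steps $\pm1$, and that for $r=8$ individual distances are not translation invariant. Your coordinatewise decomposition, with $k$ paired against $-k$ using $S=-S$, is the right repair.

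The gap is that you never close it. The induction on $m$ is left open. Your fallback restricts to $r=4$, but the paper also uses $r=8$ for both $D_4$ and $E_8$, and $r=2$, which is trivial. For $r=8$ a finite check finishes the job. With the reflected code, $d(g(a),g(a+k))$ is constant for $k=0,\pm1,\pm2,4$, with values $0,1,2,2$. For $k=\pm3$ it alternates between $1$ and $3$ with the parity of $a$, but $\phi_3(a)\equiv 4$. Since $m_{j,3}=m_{j,5}$, each $F_j$ is constant.

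The induction itself cannot succeed, because the statement is false for $r=16$. There $\phi_5(a)=6$ for $a\equiv 0,3 \pmod 4$ and $\phi_5(a)=4$ for $a\equiv 1,2 \pmod 4$. Concretely, take $A_2$ and a shear that turns the standard label differences $\pm(1,0),\pm(0,1),\pm(1,1)$ into $\pm(1,0),\pm(4,1),\pm(5,1)$. For example, use $U^{-1}=\begin{pmatrix}1&4\\0&1\end{pmatrix}$ acting on column labels. The label $(0,0)$ then has $1+1+3+3+4+4=16$ bit differences to its six neighbors, while $(1,0)$ has $1+1+3+3+2+4=14$. So the proposition, and the paper's proof, must be restricted to $r\in\{2,4,8\}$, which covers every case used in the paper. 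The proof should end with the explicit check for these moduli rather than an induction over $r=2^m$.
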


\begin{proof}
    The choice of shift vector and basis determines the mapping of integer vectors to lattice points. Denote this mapping as the following
    \begin{align}
        \beta: \vec{z_i} \mapsto \vec{l_i}
    \end{align}
    where $\vec{l_i}$ is a lattice point and $\vec{z_i} \in \mathbb{Z}^n$. 
Consider the lattice point generated from the zero vector $\vec{l}_0 = \beta(\vec{0})$. Let $\mathcal{N}_0$ denote the set of nearest neighbors of $\vec{l_0}$ and let $\mathcal{A}$ denote the integer vectors of these neighbors
\begin{equation} \mathcal{A} = \beta^{-1}(\mathcal{N}_0).\end{equation} 
We know for a given lattice point $\vec{l}_k$, the nearest neighbors $\mathcal{N}_k$ of that lattice point can be found by 
\begin{align}
    \vec{n_k} = \beta(\beta^{-1}(l_k)+\mathcal{A}~\text{mod}~r).
\end{align}
since we have assumed the constellation is shaped around a scaled voronoi region. 
Every lattice point has the same integer differences $\mathcal{A}$ to its nearest neighbors, since $\mathcal{A}$ does not depend on $k$. We know, since the bits of these lattice points have been Gray mapped, a single integer difference corresponds to a single bit difference. Thus, every lattice point has the same number of bit differences between its nearest neighbors. 
\end{proof}

\begin{figure}[]
    \vspace{1mm}
    \centering
    \includegraphics[width=.95\linewidth]{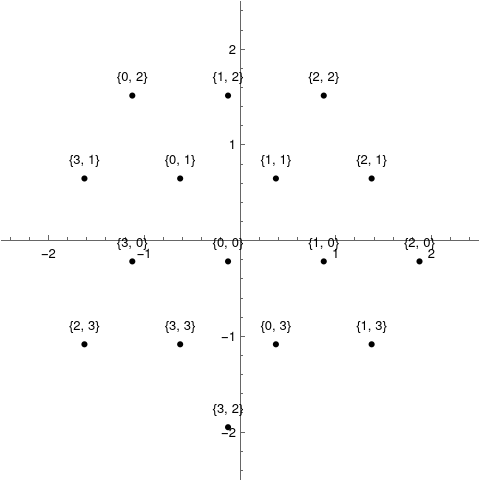}
    \caption{$A_2$ $r=4$ marked lattice generated by the standard basis. }
    \label{fig:shear}
\end{figure}

\begin{figure}[]
    \vspace{1mm}
    \centering
    \includegraphics[width=.95\linewidth]{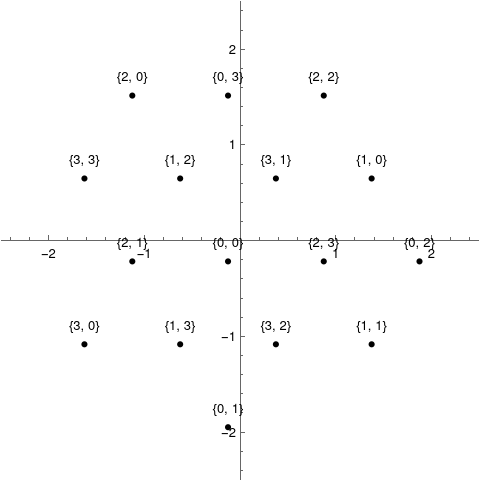}
    \caption{$A_2$ $r=4$ marked lattice generated by the product of the standard basis and the unimodular $\begin{pmatrix}
    2 & 3 \\
    1 & 2
\end{pmatrix}$, making the basis matrix $\begin{pmatrix}
    \nicefrac{3}{2} & 2 \\
    \nicefrac{\sqrt{3}}{2} & \sqrt{3}
\end{pmatrix}$. }
    \label{fig:noshear}
\end{figure}

Note, $\vec{a_i}$ depends on the mapping $\beta$, which depends on the basis matrix. So, $\vec{a_i}$ must be recomputed with every unique element in $G_n(r)$, but need only be computed with one point. Given a lattice point's integer vector using the standard basis, the integer vector of the nearest neighbor can be found by adding one of the following modulo $r$:
 $R_s = (1,0)$, $(1,1)$, $(0,1)$, $(-1,0)$, $(-1,-1)$, $(0,-1)$. See this is true in Figure \ref{fig:noshear}. Reduction modulo $r$ ensures this is true even after wrapping around the lattice constellation. 

A sheared basis has a different set of integers. Specifically, 
\begin{equation}
    R_u = R_s U^{-1}
\end{equation} where $U$ is the considered unimodular. We can calculate the HD of a given marked lattice by the following:

\begin{enumerate}
    \item Choose a unimodular U.
    \item Calculate $R_u$ mod $r$.
    \item Expand integer vectors in $R_u$ to bit vectors.
    \item Undo Gray mapping.
    \item Calculate the HD between the bit vectors in 4) and $\vec{0}$.
    
\end{enumerate}

\subsection{Theorectical Optimal Hamming Density}
\label{sec: theo}


We can calculate a theoretical lower bound for HD using a linear mapping onto $D_4$ and $E_8$ by a greedy algorithm. To enumerate labelings of nearest neighbors, we will count the minimum number of permutations of $(\pm1, 0, 0, ...)$, then $(\pm1, \pm 1, 0, ...)$, and so on. Since we have Gray mapped, this is equivalent to incrementing Hamming weight until we have reached the kissing number.  This count yields inequality~\eqref{eq: theo}. The binomial factor, counts all the unique vectors of length $n$ with $i$ non-zero coordinates. The power of two factor counts every choice of $\pm 1$ that can be inserted into those non-zero coordinates.

\begin{align}
    \min \sum_{i=1}^n \binom{\text{n}}{\text{i}}(2^i) \geq \text{\# nearest neighbors}
    \label{eq: theo}
\end{align}

Applying this inequality to $D_4$ and $E_8$ in equations \eqref{eq: d4} and \eqref{eq: e8}, the labelings need at most two bit differences in $D_4$ and three bit differences in $E_8$.

\begin{align}
    &D_4:  &&\binom{4}{1} (2)+ \binom{4}{2}(4) = 32 \label{eq: d4}\\
    &E_8:  &&\binom{8}{1} (2)+ \binom{8}{2} (4) + \binom{8}{3} (8)= 576 \label{eq: e8}
\end{align}

Choose $D_4$ nearest neighbors to be 8 one-bit differences and 16 two-bit differences, making the HD = $\frac{40}{24}\approx 1.67$. Choose $E_8$ nearest neighbors to be 16 one-bit differences, 112 two-bit differences, and 112 three-bit differences, making the HD = $\frac{576}{240}\approx 2.4$. We will use these HD values as a lower bound for future results. The existence and construction of a basis matrix for these exact linear mappings is out of scope for this paper.

\section{Hamming Descent}
\label{sec: hamming_decent}

We search for a minimum HD with a strategy which we call Hamming descent. The algorithm visualized in Figure \ref{fig:coarse} is essentially a local optimization on the Cayley graph \cite{dummit_foote_abstract_algebra} with some randomness. Consider the following steps:

\begin{enumerate}
\item Choose a unimodular and call it $G_m$.
\item Take all the positive generators $G_i$ and compute:
$G_mG_i, G_mG_i^2, … , G_mG_i^{r-1} $.
\item Calculate the HDs of the products.
\item Set the product with the first occurrence of the smallest HD as the new $G_m$.
\item Do 1) - 4) a number of times (we chose 25) and store the smallest HD at each step.
\item Do 1) - 5) for all a selection of initial unimodulars. We choose a collection of unimodulars with unique HD. 

\end{enumerate}

 \begin{figure}[]
 \centering
    \begin{tikzpicture}

    \draw[-] (-2,0)--(3.75,0);
    \draw[-] (0,-1.25)--(0,3.75);
    
    \draw[-] (1,-.25)--(1,.25);
    \draw[-] (2,-.25)--(2,.25);
    \draw[-] (3,-.25)--(3,.25);

    \draw[-] (-.25, 1)--(.25, 1);
    \draw[-] (-.25, 2)--(.25, 2);
    \draw[-] (-.25, 3)--(.25, 3);

     \draw node at (.95,-.75) {$G_{\text{m}}G_1$};
     \draw node at (2,-.75) {$G_{\text{m}}G_1^2$};
     \draw node at (3.05,-.75) {$G_{\text{m}}G_1^3$};

     \draw node at (1,1) {$G_{\text{m}}G_2$};
     \draw node at (1,2) {$G_{\text{m}}G_2^2$};
     \draw node at (1,3) {$G_{\text{m}}G_2^3$};

     \filldraw[black] (0,0) circle (2pt);
     \draw node at (-.35,-.25) {$G_{\text{m}}$};
            
    \end{tikzpicture}
    \caption{Hamming descent diagram: simplified Cayley graph continually shifted to a new starting point. Here, we show two generators with modulus $r=4$.}
     \label{fig:coarse}
\end{figure}

We conducted the Hamming descent on $D_4$ $r = 4\text{ and }8$ and $E_8$ $r=2,4,\text{ and }8$. We include $E_8$ $r=2$ only for comparisons to the larger $r$ values in $E_8$. For reference the HDs using the standard bases are: 2.33 for $D_4$ $r = 4\text{ and } 8$, 7.35 for $E_8$ $r=4$, and 9.13 for $E_8$ $r=8$. 

During our search, $D_4$ converged to the same minimum HD = 2 for both r values. $E_8$ $r=2$ converged to HD = 3.73. $E_8$ $r = 4 \text{ and }8$ reached 5 local minimums respectively. For $r=4$: 3.86, 4.15, 4.3, 4.43, and 4.58 and for $r=8:$ 3.86, 4.15, 4.31, 4.55, and 4.75. Neither of them saw the minimum HD = 3.73 from $r=2$.  It is unclear whether the Hamming descent failed to escape local minima or if 3.73 is actually optimal. However, we did not continue this investigation since the bit error rate difference between 3.86 and 3.73 is relatively small.

\subsection{Resulting Matrices}
We include the unimodulars with the lowest found HD for each $r$ considered. We also include their product with the original bases.

\begin{align}
& D_4\hspace{1mm} r = 4,8 \text{ } HD = 2 \nonumber\\[0.4em]
& U =
{
\setlength{\arraycolsep}{3pt}
\begin{pmatrix}
 1 & 0 & 0 & 0 \\
 0 & 1 & 0 & 0 \\
-1 & 0 & 1 & 0 \\
 0 & 0 & 0 & 1
\end{pmatrix}
}
\\[0.6em]
& BU =
{
\setlength{\arraycolsep}{3pt}
\begin{pmatrix}
 1 &  1 &  0 & 0 \\
 0 & -1 &  1 & 0 \\
 1 &  0 & -1 & 1 \\
 0 &  0 &  0 & 1
\end{pmatrix}
}\\
& E_8\hspace{1mm} r = 4,8 \text{ } HD = 8.86 \nonumber\\[0.4em]
& U =
{
\setlength{\arraycolsep}{3pt}
\begin{pmatrix}
-1 &  1 & 0 & 0 & 0 & 1 & 1 & 0 \\
-2 &  1 & 0 & 0 & 0 & 2 & 2 & 0 \\
-2 &  0 & 1 & 0 & 0 & 2 & 2 & 0 \\
-1 &  0 & 0 & 1 & 1 & 1 & 1 & 0 \\
-1 &  0 & 0 & 0 & 1 & 1 & 1 & 0 \\
 0 &  0 & 0 & 0 & 0 & 1 & 1 & 0 \\
 0 &  0 & 0 & 0 & 0 & 0 & 1 & 0 \\
 0 & -1 & 0 & 0 & 0 & 0 & 0 & 1
\end{pmatrix}
}
\\[0.6em]
& BU =
{
\setlength{\arraycolsep}{2.5pt}
\begin{pmatrix}
0 & \nicefrac{1}{2} & 0 & 0 & 0 & 0 & 0 & \nicefrac{1}{2} \\
0 & \nicefrac{1}{2} & -1 & 0 & 0 & 0 & 0 & \nicefrac{1}{2} \\
-1 & -\nicefrac{1}{2} & 1 & -1 & -1 & 1 & 1 & \nicefrac{1}{2} \\
0 & -\nicefrac{1}{2} & 0 & 1 & 0 & 0 & 0 & \nicefrac{1}{2} \\
-1 & -\nicefrac{1}{2} & 0 & 0 & 1 & 0 & 0 & \nicefrac{1}{2} \\
0 & -\nicefrac{1}{2} & 0 & 0 & 0 & 1 & 0 & \nicefrac{1}{2} \\
0 & -\nicefrac{1}{2} & 0 & 0 & 0 & 0 & 1 & \nicefrac{1}{2} \\
0 & -\nicefrac{1}{2} & 0 & 0 & 0 & 0 & 0 & \nicefrac{1}{2}
\end{pmatrix}
}
\end{align}

\section{Performance in an AWGN Channel}

We took the unimodulars with the lowest and highest observed HD and compared their bit error rate curves with those of the original basis. In Figures \ref{fig:d4r4berves} and \ref{fig:d4r8berves} we see the HD = 2 unimodular has constant, small gain above the standard HD = 2.33 basis, around 0.1 dB. The range of bit error rate performance between bases is slightly larger in Figure \ref{fig:d4r8berves}. In fact, the worst observed HD increased from 3 with $r=4$ to 3.41 with $r=8$. 

For $E_8$, we see in Figure \ref{fig:e8r4berves} the unimodular with HD = 3.86 gives about 0.25 dB of gain at $10^{-4}$ over the standard HD = 7.35 unimodular with $r=4$. The HD of the standard basis increases in $r=8$ to HD = 9.13. Giving the unimodular with HD = 3.86 around 0.5 dB of gain at $10^{-4}$ in Figure \ref{fig:e8r8berves}. In both cases the standard is close to the worst seen HD: HD = 7.91 for $r=4$ and HD = 10.2 for $r=8$.

\begin{figure}[]
    \centering
    \includegraphics[width=1\linewidth]{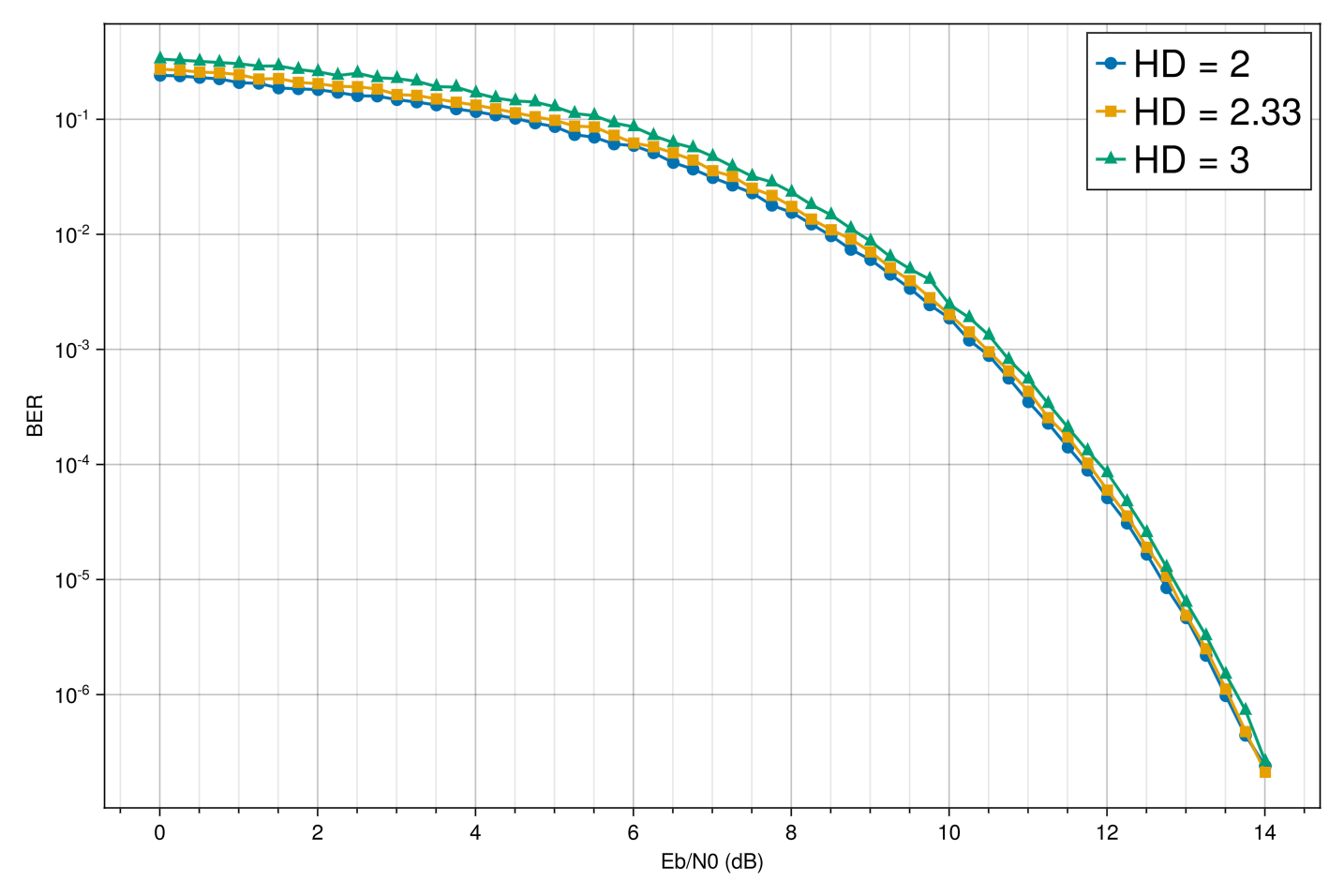}
    \caption{$G_4(4)$ bit error rate curves where the standard basis has HD = 2.33, the best unimodular gave HD = 2, and the worst unimodular gave HD = 3.}
    \label{fig:d4r4berves}
\end{figure}

\begin{figure}[]
    \centering
    \includegraphics[width=1\linewidth]{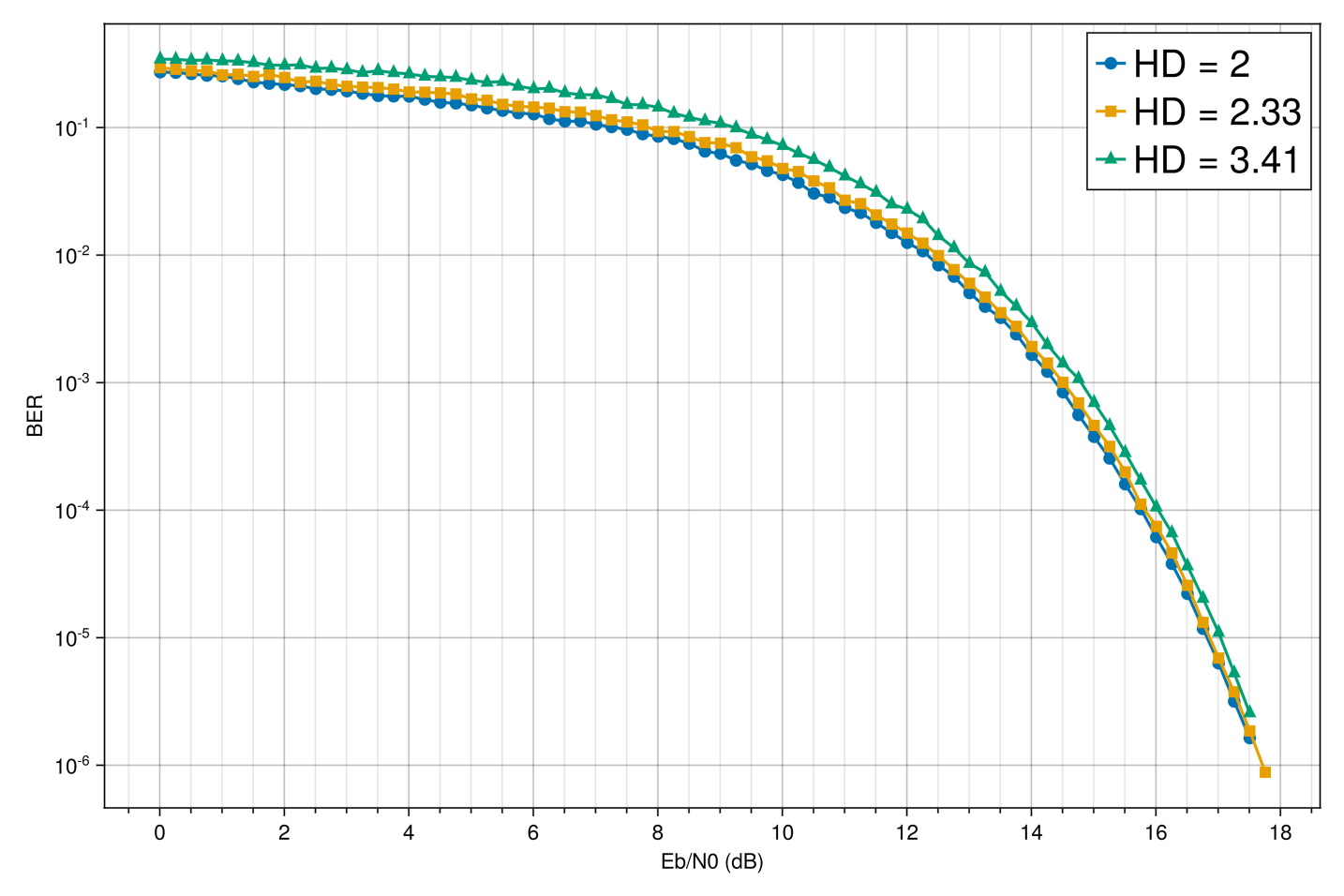}
    \caption{$G_4(8)$ bit error rate curves where the standard basis has HD = 2.33, the best unimodular gave HD = 2, and the worst unimodular gave HD = 3.41.}
    \label{fig:d4r8berves}
\end{figure}

Referring back to the theoretical lower bounds in section \ref{sec: theo}, the theoretical bound for $D_4$ was HD = 1.67. In comparison, our best bases achieved HD = 2. For $E_8$, the theoretical bound was HD = 2.4, where our best bases achieved HD = 3.86.  


\begin{figure}[]
    \centering
    \includegraphics[width=1\linewidth]{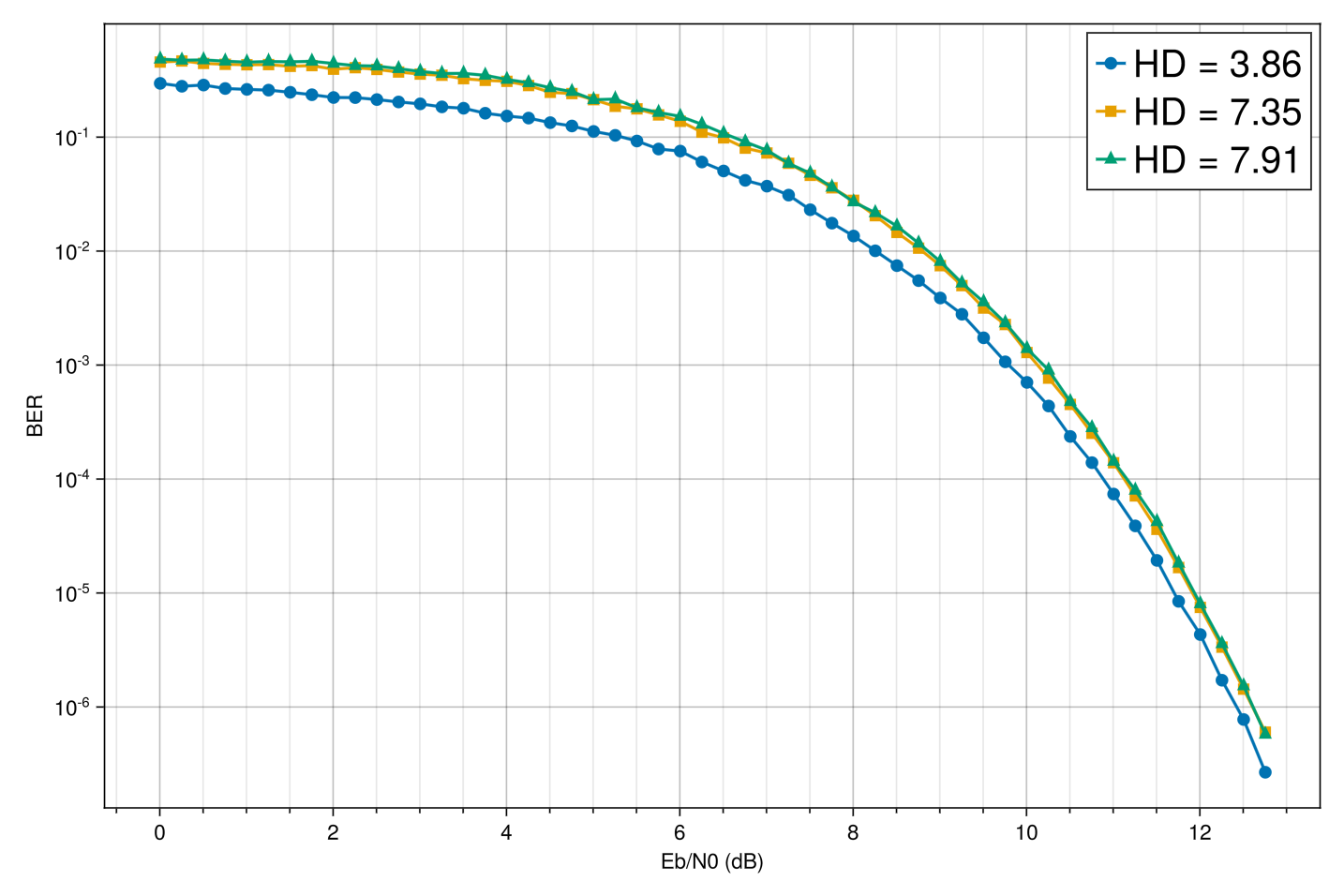}
    \caption{$G_8(4)$ bit error rate curves where the standard basis has HD = 7.35, the best unimodular gave HD = 3.86, and the worst unimodular gave HD = 7.91.}
    \label{fig:e8r4berves}
\end{figure}

\begin{figure}[]
    \centering
    \includegraphics[width=1\linewidth]{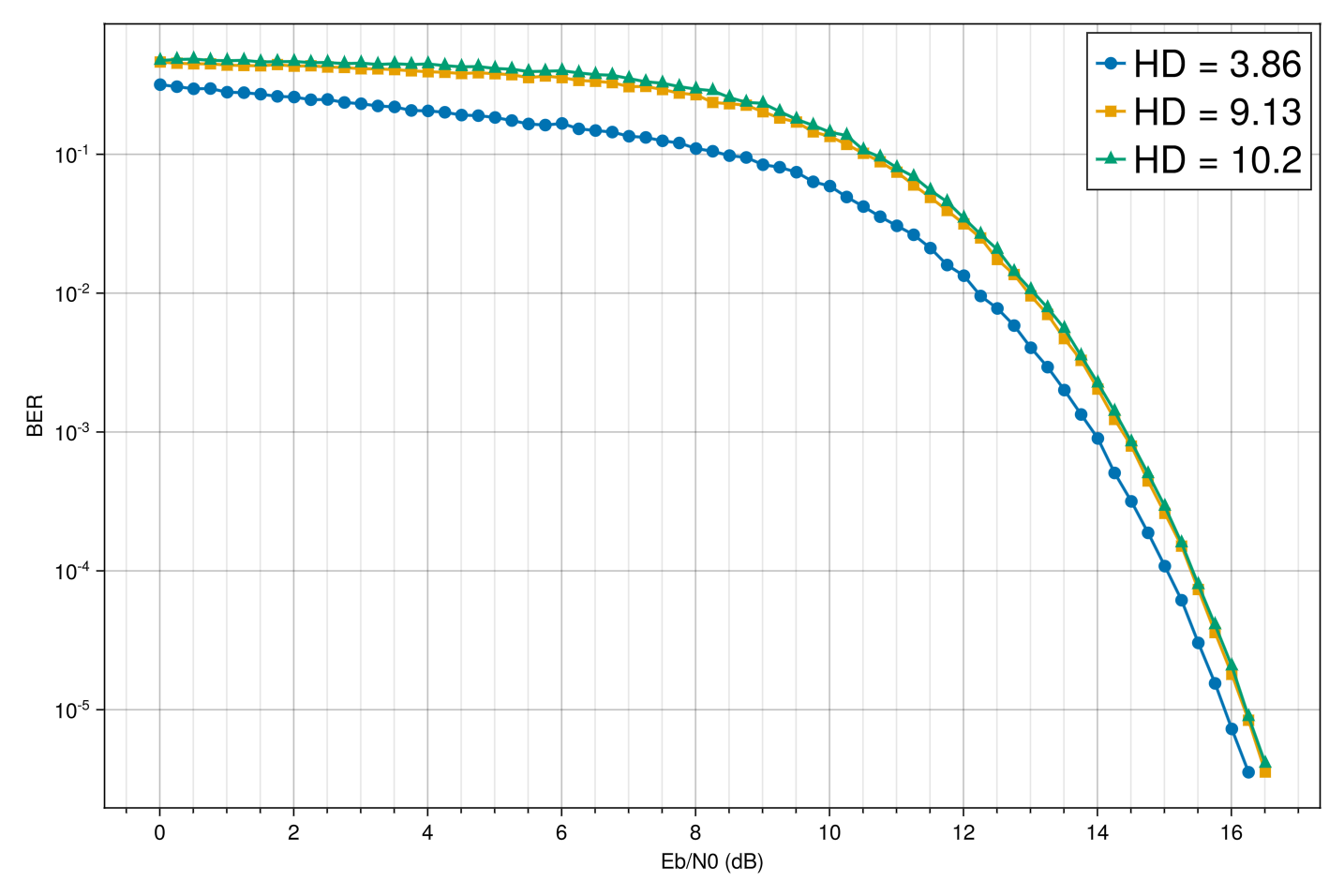}
    \caption{$G_8(8)$ bit error rate curves where the standard basis has HD = 9.13, the best unimodular gave HD = 3.86, and the worst unimodular gave HD = 10.2.}
    \label{fig:e8r8berves}
\end{figure}

\section{Conclusion}
Using a Hamming descent search, with HD as a performance metric, we have found BER gain over the standard bases found in literature. We provided the theoretical lower bound for HD on the $D_4$ and $E_8$ lattice and have shown improvement towards that bound. These results are a step towards addressing the Gray coding obstacle on lattices when using a self-similar lattice code construction. 
We predict the available gain will increase as the dimension of your root lattice increases. However, as the space of unimodulars grows, alternative algorithms (e.g., genetic algorithms on Hamming weight) may be necessary.

 \section*{Acknowledgment}
This research was supported by the U.S. Department of
Commerce’s National Telecommunications and Information Administration
(NTIA) under the Public Wireless Supply Chain Innovation Fund Grant
Program (Award 24-60-IF2415: ASPEN - Advanced Signal Processing
Enhancement for Next-Generation Open Radio Units), administered by the
National Institute of Standards and Technology.

\bibliographystyle{IEEEtran}
\bibliography{ch1}

\end{document}